\theoremstyle{plain}
\newtheorem{thm}{\protect\theoremname}
  \theoremstyle{plain}
  \newtheorem{lem}[thm]{\protect\lemmaname}
  \theoremstyle{plain}
  \newtheorem{cor}[thm]{\protect\corollaryname}
  \providecommand{\corollaryname}{Corollary}
  \providecommand{\lemmaname}{Lemma}
\providecommand{\theoremname}{Theorem}
\begin{document}

\title{Average position in quantum walks with a U(2) coin}

\author{Min Li}

\author{Yong-Sheng Zhang}

\email{Email: yshzhang@ustc.edu.cn}

\author{Guang-Can Guo}

\affiliation{Key Laboratory of Quantum Information, University of Science and
Technology of China, CAS, Hefei, 230026, People's Republic of China}

\date{\today}
\begin{abstract}
We investigated discrete-time quantum walks with an arbitary unitary
coin. Here we discover that the average position $\left\langle x\right\rangle =\max(\left\langle x\right\rangle )\sin(\alpha+\gamma)$,
while the initial state is $1/\sqrt{2}(\mid0L\rangle+i\mid0R\rangle)$.
We prove the result and get some symmetry properties of quantum walks
with a U(2) coin with $\mid0L\rangle$ and $\mid0R\rangle$ as the
initial state.
\end{abstract}
\maketitle

\section{introduction}

Quantum walks (QWs) were first introduced in 1993 \cite{Aharonov}
as generalization of classical random walks. According to the time
evolution, QWs can be devided into discrete-time and continuous-time
\cite{Farhi-continue-time} QWs. Recently, both continuous-time \cite{Childs-continue-universal}
and discrete-time \cite{Lovett-discrete-universal} QWs are found
to be universal for quantum computation. A number of quantum algorithms
based on QWs have already been proposed in \cite{Shenvi2003,Childs2004,Ambainis,Ambainis2004,Childs2003,Childs2002}.
In addition, QWs in graph \cite{Aharonov-1}, on a line with a moving
boundary \cite{Kwek}, with multiple coins \cite{Brun-1} or decoherent
coins \cite{Brun} have been discussed also.

QWs using a SU(2) coin was introduced by Chandrashekar, \textsl{et
al}. \cite{Chandrashekar}, where the standard deviation and measurement
entropy properties were discussed. Here we discuss the symmetry and
average position properties for the QWs with a U(2) coin.

\section{hadamard quantum walks}

In this paper, we always discuss within the discrete-time QWs . The
total Hilbert space for QWs is given by $\mathcal{H}\equiv\mathcal{H}_{P}\otimes\mathcal{H}_{C}$,
where $\mathcal{H}_{P}$ is spanned by the orthonormal states $\left\{ \mid x\rangle\right\} $
and $\mathcal{H}_{C}$ is the two-dimensional coin space spaned by
two orthonormal states $\mid L\rangle$ and $\mid R\rangle.$

Each step of the QWs can be splitted into two operations: the evolution
of coin state and the particle movement according to the coin state.

Here the Hadamard walk, the coin is evolved by applying the Hadamard
operation:

\[
H=\frac{1}{\sqrt{2}}\begin{pmatrix}1 & 1\\
1 & -1
\end{pmatrix},
\]
the particle movement operator is given by

\begin{equation}
S=e^{ip\sigma_{z}}=\sum_{x}S_{x},
\end{equation}
where $p$ is the momentum operator, $\sigma_{z}$ is the Pauli-$z$
operator, 

\begin{equation}
\hat{S}_{x}=\mid x+1\rangle\langle x\mid\otimes\mid R\rangle\langle R\mid+\mid x-1\rangle\langle x\mid\otimes\mid L\rangle\langle L\mid.
\end{equation}
Therefore, after $t$ steps QWs 
\begin{equation}
\begin{aligned}\mid\Psi_{t}\rangle & =\left[S(I_{P}\otimes H)\right]^{t}\mid\Psi_{0}\rangle\\
 & =\left[\sum_{x}S_{x}(I_{P}\otimes H_{C})\right]^{t}\mid\Psi_{0}\rangle,
\end{aligned}
\label{eq:hadamardDQW}
\end{equation}
where $\mid\Psi_{in}\rangle$ is the initial state of the system.

\section{generalized discrete time quantumw walks}

An arbitrary one-qubit unitary operation can be written as a U(2)
matrix: 

\begin{equation}
U_{\alpha,\beta,\gamma,\theta}=e^{i\theta}\left(\begin{array}{cc}
e^{i\alpha}\cos\beta, & -e^{-i\gamma}\sin\beta\\
e^{i\gamma}\sin\beta, & e^{-i\alpha}\cos\beta
\end{array}\right)\label{eq:U(2)}
\end{equation}
For example, the Hadamard operator can be described in the form $H=U_{\frac{\pi}{2},\frac{\pi}{4},\frac{\pi}{2},-\frac{\pi}{2}}$.
By replacing the Hadamard coin with an operator $U_{\alpha,\beta,\gamma,\theta}$,
we can obtain the generalized QWs \cite{Chandrashekar}, which can
be written as 

\begin{equation}
\mid\Psi_{t}\rangle=\left[S(I_{P}\otimes U_{\alpha,\beta,\gamma,\theta})\right]^{t}\mid\Psi_{0}\rangle
\end{equation}

\begin{lem}
\label{theo:u2-su2}The quantum walks have the same probability distribution
with a U(2) coin or a SU(2) coin which has the same parameters $\alpha$,
$\beta$ and $\gamma$ in the U(2) martrix.\end{lem}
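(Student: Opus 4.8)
The plan is to observe that the only thing separating $U_{\alpha,\beta,\gamma,\theta}$ from a genuine SU(2) matrix is the overall phase $e^{i\theta}$. First I would verify that the matrix $U_{\alpha,\beta,\gamma,0}$ obtained by deleting the prefactor $e^{i\theta}$ in \eqref{eq:U(2)} has unit determinant, $e^{i\alpha}\cos\beta\cdot e^{-i\alpha}\cos\beta-(-e^{-i\gamma}\sin\beta)(e^{i\gamma}\sin\beta)=\cos^{2}\beta+\sin^{2}\beta=1$, so that it is precisely the SU(2) coin carrying the parameters $\alpha,\beta,\gamma$, and one has the factorization $U_{\alpha,\beta,\gamma,\theta}=e^{i\theta}\,U_{\alpha,\beta,\gamma,0}$.

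Next I would propagate this factorization through one step and then through $t$ steps. Since $e^{i\theta}$ is a scalar it commutes with $S$ and with $I_{P}\otimes(\cdot)$, so $S(I_{P}\otimes U_{\alpha,\beta,\gamma,\theta})=e^{i\theta}\,S(I_{P}\otimes U_{\alpha,\beta,\gamma,0})$, and therefore $[S(I_{P}\otimes U_{\alpha,\beta,\gamma,\theta})]^{t}=e^{it\theta}\,[S(I_{P}\otimes U_{\alpha,\beta,\gamma,0})]^{t}$. Applying both operators to a common initial state $\mid\Psi_{0}\rangle$ gives $\mid\Psi_{t}^{U(2)}\rangle=e^{it\theta}\,\mid\Psi_{t}^{SU(2)}\rangle$, i.e. the two evolved states differ only by a global phase.

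Finally I would invoke the phase-invariance of Born probabilities: the position distribution after $t$ steps is $P(x,t)=\langle\Psi_{t}\mid\bigl(\mid x\rangle\langle x\mid\otimes I_{C}\bigr)\mid\Psi_{t}\rangle$, and replacing $\mid\Psi_{t}\rangle$ by the unit-modulus rescaling $e^{it\theta}\mid\Psi_{t}\rangle$ leaves every such expectation value unchanged. Hence the U(2) walk and the SU(2) walk with parameters $\alpha,\beta,\gamma$ yield identical probability distributions, in fact for any initial state, which is slightly stronger than the statement. I do not expect a genuine obstacle here; the only care needed is in making the statement precise — that ``the same parameters'' means the SU(2) coin is literally $U_{\alpha,\beta,\gamma,0}$, and that the conclusion does not rely on the initial state being a coin eigenstate or supported on a single site. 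The whole argument is algebraic, using only that the phase commutes with $S$ and that probabilities are insensitive to an overall phase.
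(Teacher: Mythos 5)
Your proposal is correct and follows essentially the same route as the paper: factor the coin as $U_{\alpha,\beta,\gamma,\theta}=e^{i\theta}U^{S}_{\alpha,\beta,\gamma}$, observe that the evolved state acquires only the global phase $e^{it\theta}$, and conclude that $P(x,t)=|e^{it\theta}\langle x\mid\Psi_t^{S}\rangle|^2=P^{S}(x,t)$. The extra details you supply (the determinant check and the explicit commutation of the scalar phase through $S$ and $I_P\otimes(\cdot)$) are left implicit in the paper but change nothing of substance.
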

\begin{proof}
The SU(2) coin operator can be written as 
\begin{equation}
U_{\alpha,\beta,\gamma}^{S}=\left(\begin{array}{cc}
e^{i\alpha}\cos\beta, & -e^{-i\gamma}\sin\beta\\
e^{i\gamma}\sin\beta, & e^{-i\alpha}\cos\beta
\end{array}\right),\label{eq:SU(2)}
\end{equation}
Then the U(2) coin $U_{\alpha,\beta,\gamma,\theta}=e^{i\theta}U_{\alpha,\beta,\gamma}^{S}$.
The probability distribution for QWs with a U(2) coin after $t$ steps:
\begin{equation}
\begin{aligned}P(x,t) & =|\langle x|\Psi_{t}\rangle|^{2}=|e^{it\theta}\langle x\mid\Psi_{t}^{S}\rangle|^{2}\\
 & \equiv P^{S}(x,t),
\end{aligned}
\end{equation}
where $\mid\Psi_{t}^{S}\rangle$ and $P^{S}(x,t)$ are the state and
the probability distribution for QWs with a SU(2) coin after $t$
steps respectively.\end{proof}
\begin{cor}
\label{coro:position-average}The average position is the same in
quantum walks with a U(2) coin and a SU(2) coin if the two coins have
the same parameters $\alpha$, $\beta$ and $\gamma$ with the U(2)
coin.\end{cor}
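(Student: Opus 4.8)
The plan is to deduce this directly from Lemma~\ref{theo:u2-su2}. Recall that the average position after $t$ steps is, by definition, the expectation value of the position operator in the state $\mid\Psi_t\rangle$, which can be written as
\begin{equation}
\langle x\rangle_t=\langle\Psi_t\mid(\hat{x}\otimes I_C)\mid\Psi_t\rangle=\sum_x x\,P(x,t),
\end{equation}
so that $\langle x\rangle_t$ is a functional of the probability distribution $P(\cdot,t)$ alone, with no dependence on the relative or global phases of the amplitudes.

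First I would note that after $t$ steps the walker's amplitude is supported on the finite set $\{-t,-t+1,\dots,t\}$, so the sum above is finite and there are no convergence subtleties. Next I would invoke Lemma~\ref{theo:u2-su2}, which gives $P(x,t)=P^{S}(x,t)$ for every $x$ and every $t$ once the parameters $\alpha,\beta,\gamma$ are matched — equivalently, the global phase $e^{it\theta}$ accumulated by the U(2) walk cancels in $|\langle x\mid\Psi_t\rangle|^{2}$. Substituting this equality term by term into the finite sum yields $\langle x\rangle_t=\sum_x x\,P(x,t)=\sum_x x\,P^{S}(x,t)=\langle x\rangle_t^{S}$, which is the assertion.

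Since the argument is essentially a one-line consequence of the preceding lemma, I do not expect any real obstacle; the only point worth a remark is the observation made above, namely that $\langle x\rangle_t$ — unlike the coin state or internal phases — is determined by $\mid\Psi_t\rangle$ only through the measured position distribution, so the extra $U(1)$ factor that distinguishes the two coins is invisible to it. The same reasoning in fact shows that every moment $\langle x^{k}\rangle_t$, and hence the standard deviation, coincides for the two coins as well.
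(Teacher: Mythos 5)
Your argument is correct and is essentially identical to the paper's own proof: both simply write $\langle x\rangle=\sum_{x}xP(x,t)$ and substitute $P(x,t)=P^{S}(x,t)$ from Lemma~\ref{theo:u2-su2}. Your added remarks on finite support and on higher moments are sound but not needed for the statement.
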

\begin{proof}
From Lemma \ref{theo:u2-su2}, we can know that the average position
for a U(2) coin $\langle x\rangle=\sum_{x}xP(x,t)=\sum_{x}xP^{S}(x,t)\equiv\langle x\rangle^{S}$,
where $\langle x\rangle^{S}$ is the average position for QWs with
a SU(2) coin.
\end{proof}
With corollary \ref{coro:position-average}, if we want to know the
average position property in QWs with an arbitray unitary operator,
we only need to study the quantum walk with a SU(2) coin instead,
for the rest part of this paper, we always use the SU(2) coin as denoted
in Eq. \eqref{eq:SU(2)}.

Following the analysis in Ref. \cite{Nayak}, the state after $t$
steps QWs with a SU(2) coin

\begin{equation}
\mid\Psi_{t}\rangle=\left[S(I_{P}\otimes U_{\alpha,\beta,\gamma}^{S})\right]^{t}\mid\Psi_{0}\rangle,
\end{equation}
 the spatial Fourier transformation for of the wave function $\Psi(x,t)$
over $\mathscr{\mathcal{Z}}$ is given by

\begin{equation}
\begin{aligned}\tilde{\Psi}(k,t) & =\sum_{x=-\infty}^{\infty}\Psi(x,t)e^{ikx}\end{aligned}
,
\end{equation}
where $k\in[-\pi,\pi]$. We can know

\begin{equation}
\tilde{\Psi}(k,t)=(M_{k})^{t}\tilde{\Psi}(k,0),
\end{equation}
where 
\begin{equation}
\begin{aligned}M_{k} & =e^{ik}M_{+}+e^{-ik}M_{-}\\
 & =\left(\begin{array}{cc}
e^{-i(k-\alpha)}\cos\beta, & -e^{-i(k+\gamma)}\sin\beta\\
e^{i(k+\gamma)}\sin\beta, & e^{i(k-\alpha)}\cos\beta
\end{array}\right).
\end{aligned}
\end{equation}
The eigenvalues of $M_{k}$ is 
\begin{equation}
\begin{cases}
\lambda_{a}=e^{-iw}\\
\lambda_{b}=e^{iw}
\end{cases},
\end{equation}
where $\cos w=\cos(k-\alpha)\cos\beta$. And the eigenstates 
\begin{equation}
\begin{cases}
\tilde{\Psi}_{k}^{a} & =\frac{1}{C_{k}^{a}}\left(\begin{array}{c}
P_{k}\\
Q_{k}^{a}
\end{array}\right)\\
\tilde{\Psi}_{k}^{b} & =\frac{1}{C_{k}^{b}}\left(\begin{array}{c}
P_{k}\\
Q_{k}^{b}
\end{array}\right)
\end{cases},
\end{equation}
where
\begin{equation}
\begin{aligned}\begin{cases}
P_{k}=-e^{-i(k+\gamma)}\sin\beta\\
Q_{k}^{a}=-i\sin\omega_{k}+i\sin(k-\alpha)\cos\beta\\
Q_{k}^{b}=i\sin\omega_{k}+i\sin(k-\alpha)\cos\beta
\end{cases}\end{aligned}
,
\end{equation}
\begin{equation}
\begin{aligned}C_{k}^{a} & =\sqrt{P_{k}^{*}P_{k}+(Q_{k}^{a})^{*}Q_{k}^{a}}\\
 & =\sqrt{2(\sin^{2}\omega_{k}-\cos\beta\sin(k-\alpha)\sin\omega_{k})},
\end{aligned}
\end{equation}
\begin{equation}
\begin{aligned}C_{k}^{b} & =\sqrt{P_{k}^{*}P_{k}+(Q_{k}^{b})^{*}Q_{k}^{b}}\\
 & =\sqrt{2(\sin^{2}\omega_{k}+\cos\beta\sin(k-\alpha)\sin\omega_{k})}.
\end{aligned}
\end{equation}

\begin{figure}
\includegraphics[width=3in]{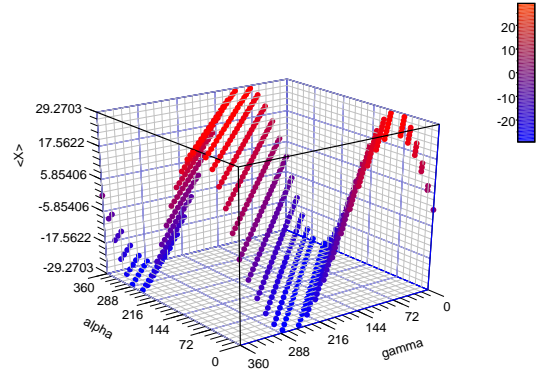}

\caption{The average of position $\left\langle x\right\rangle $ for quantun
walks after $t=100$ steps with a SU(2) coin, where$\beta=\pi/6$.}
\label{fig:ave}
\end{figure}

\begin{figure}
\includegraphics[width=3in]{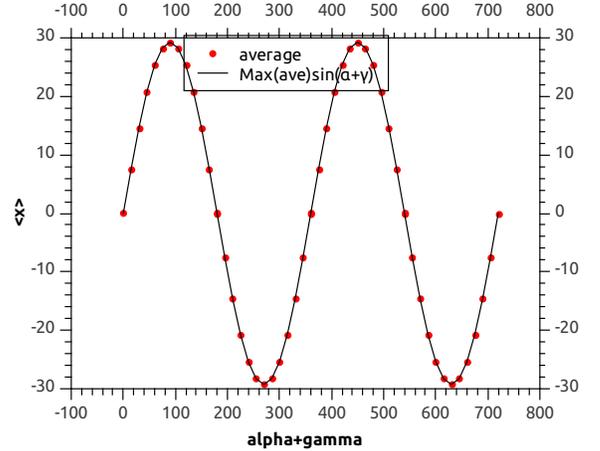}

\caption{(red dot)The average of position with $\alpha+\gamma$ after $100$
steps quantum walk when $\beta=\pi/6$, (line) $f(\phi)=\max(\left\langle x\right\rangle )\sin(\phi)$. }
\label{fig:2}
\end{figure}

\section{the average position in quantum walks}

Fig. \ref{fig:ave} and Fig. \ref{fig:2} show the average position
after $t=100$ steps QWs with a SU(2) coin in the case of $\beta=\pi/6$,
while the initial state is $1/\sqrt{2}(\mid0L\rangle+i\mid0R\rangle)$.
From Fig. \ref{fig:ave}, we can know that $\left\langle x\right\rangle $
only depends on the sum of $\alpha$ and $\gamma$. Fig. \ref{fig:2}
shows that the actual $\left\langle x\right\rangle $ exactly match
the function of $f(\phi)=\max(\left\langle x\right\rangle )\sin(\phi)$,
then we conject that $\left\langle x\right\rangle =G(\beta,t)\sin(\alpha+\gamma)$.

\section{proof in methematics}
\begin{thm}
The probability distribution for quantum walks with a SU(2) coin is
independent on the parameter $\alpha$ and $\gamma$, when the initial
state is $\mid0L\rangle$ or $\mid0R\rangle.$ i.e.
\begin{equation}
\begin{aligned}\begin{cases}
P_{\mid0L\rangle}(\alpha,\beta,\gamma,x,t)=P_{\mid0L\rangle}(\beta,x,t)\\
P_{\mid0R\rangle}(\alpha,\beta,\gamma,x,t)=P_{\mid0R\rangle}(\beta,x,t)
\end{cases}\end{aligned}
,
\end{equation}
for any $\alpha$ and $\gamma.$\end{thm}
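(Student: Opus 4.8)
The plan is to show that for every pair $(\alpha,\gamma)$ the one‑step walk operator $W_{\alpha,\beta,\gamma}:=S(I_P\otimes U_{\alpha,\beta,\gamma}^{S})$ is unitarily equivalent to $W_{0,\beta,0}$ via a unitary $R=R_{\alpha,\gamma}$ having two further features: $R$ fixes each of $|0L\rangle$ and $|0R\rangle$ up to a global phase, and $R$ commutes with every position projector $\Pi_x:=|x\rangle\langle x|\otimes I_C$. Granted such an $R$, one has $W_{\alpha,\beta,\gamma}^{t}=R\,W_{0,\beta,0}^{t}\,R^{\dagger}$, and for $|\psi_0\rangle\in\{|0L\rangle,|0R\rangle\}$,
\begin{align*}
P_{\alpha,\beta,\gamma}^{|\psi_0\rangle}(x,t)
&=\langle\psi_0|(W_{\alpha,\beta,\gamma}^{t})^{\dagger}\Pi_x\,W_{\alpha,\beta,\gamma}^{t}|\psi_0\rangle\\
&=\langle\psi_0|R\,(W_{0,\beta,0}^{t})^{\dagger}\Pi_x\,W_{0,\beta,0}^{t}\,R^{\dagger}|\psi_0\rangle\\
&=P_{0,\beta,0}^{|\psi_0\rangle}(x,t),
\end{align*}
where $R^{\dagger}\Pi_x R=\Pi_x$ is used in the second step and the leftover phase from $R^{\dagger}|\psi_0\rangle$ cancels the one from $\langle\psi_0|R$. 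The right‑hand side depends only on $\beta,x,t$, which is exactly the claim.

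To construct $R$ I would use two elementary gauge transformations. The first is the position‑dependent phase $T_\phi:=\sum_x e^{i\phi x}|x\rangle\langle x|\otimes I_C$. A one‑line check from the explicit $\hat S_x$ gives $T_\phi^{\dagger}\,S\,T_\phi=S\,(I_P\otimes\mathrm{diag}(e^{i\phi},e^{-i\phi}))$ in the $\{|L\rangle,|R\rangle\}$ basis, while $T_\phi$ trivially commutes with $I_P\otimes U$; hence $T_\phi^{\dagger}W_{\alpha,\beta,\gamma}T_\phi=W_{\alpha+\phi,\,\beta,\,\gamma-\phi}$, i.e.\ conjugation by $T_\phi$ performs $(\alpha,\gamma)\mapsto(\alpha+\phi,\gamma-\phi)$ and leaves $\beta$ fixed. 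The second is the coin‑diagonal phase $G_{\mu,\nu}:=I_P\otimes\mathrm{diag}(e^{i\mu},e^{i\nu})$. Since $S$ acts on the coin only through the projectors $|L\rangle\langle L|$ and $|R\rangle\langle R|$, it commutes with $G_{\mu,\nu}$, and a direct matrix computation gives $G_{\mu,\nu}^{\dagger}(I_P\otimes U_{\alpha,\beta,\gamma}^{S})G_{\mu,\nu}=I_P\otimes U_{\alpha,\,\beta,\,\gamma+\mu-\nu}^{S}$, so conjugation by $G_{\mu,\nu}$ performs $\gamma\mapsto\gamma+\mu-\nu$ while fixing $\alpha,\beta$. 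These two short computations are essentially the whole argument.

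I would then take $R:=T_{-\alpha}\,G_{-(\alpha+\gamma),\,0}$. Composing the two effects, $R^{\dagger}W_{\alpha,\beta,\gamma}R$ becomes $W_{0,\beta,\alpha+\gamma}$ after the $T_{-\alpha}$ conjugation and then $W_{0,\beta,0}$ after the $G$ conjugation, which is the required equivalence. Moreover $T_{-\alpha}$ acts as the identity on $|0L\rangle$ and $|0R\rangle$ (the phase $e^{-i\alpha\cdot 0}=1$) and is diagonal in the position index, while $G_{-(\alpha+\gamma),0}$ multiplies $|0L\rangle$ and $|0R\rangle$ by $e^{-i(\alpha+\gamma)}$ and $1$ respectively and acts only on the coin; hence $R$ fixes each of the two localized coin states up to a phase and commutes with every $\Pi_x$, which are precisely the two extra features invoked above. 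The same $R$ handles the $|0L\rangle$ and the $|0R\rangle$ initializations verbatim.

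The step I expect to be the real point is the second gauge transformation. The position‑dependent phase $T_\phi$ alone only shows that $P$ depends on $\alpha,\gamma$ through the single combination $\alpha+\gamma$ — which is already what Figs.~\ref{fig:ave}--\ref{fig:2} display — and one genuinely needs the coin‑diagonal phase $G_{\mu,\nu}$ to eliminate that residual dependence. It is also worth noting why the hypothesis on the initial state cannot be dropped: $G_{\mu,\nu}$ multiplies $|0L\rangle$ and $|0R\rangle$ by \emph{unequal} phases, so it fixes a single basis‑localized coin state up to a global phase but not a superposition such as $\tfrac{1}{\sqrt2}(|0L\rangle+i|0R\rangle)$; this is exactly why that symmetric initial state retains an $\alpha+\gamma$ dependence and produces the $\max(\langle x\rangle)\sin(\alpha+\gamma)$ behavior announced in the abstract.
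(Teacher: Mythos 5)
Your proof is correct, and it takes a genuinely different route from the paper's. The paper works in Fourier space: it expands the probability as a double integral $\frac{1}{4\pi^2}\iint g_j(k_1,k_2,\alpha,\beta,\gamma,x,t)\,dk_1dk_2$, substitutes $h=k-\alpha$, and uses $2\pi$-periodicity to slide the integration domain back to $[-\pi,\pi]^2$; the disappearance of $\gamma$ is absorbed into the (unverified) claim that $g_j$ depends only on $h_1,h_2,\beta,x,t$, which in fact rests on the cancellation of the $e^{\mp i\gamma}$ phases in products such as $P_{k_1}^{*}P_{k_2}$. Your gauge-equivalence argument --- conjugating the walk operator by the position-dependent phase $T_{-\alpha}$ and the coin-diagonal phase $G_{-(\alpha+\gamma),0}$, both of which commute with the position projectors and fix $\mid0L\rangle$ and $\mid0R\rangle$ up to a phase --- proves the same statement entirely in position space, with no spectral decomposition and no integral manipulation. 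I checked your two conjugation identities against the explicit $\hat S_x$ and $U^{S}_{\alpha,\beta,\gamma}$ and they are right, including the basis ordering. What the paper's route buys is the machinery (eigenvalues $e^{\pm i\omega_k}$, eigenvectors $P_k,Q_k^{a,b}$) that it reuses in Theorems 3--6; what yours buys is that the $\gamma$-cancellation is made explicit rather than implicit, the hypothesis on the initial state is isolated cleanly (the invariance holds exactly for eigenstates of your $R$, and $T_{-\alpha}$ only trivializes on states localized at $x=0$), and the failure mode for $\tfrac{1}{\sqrt2}(\mid0L\rangle+i\mid0R\rangle)$ --- hence the residual $\sin(\alpha+\gamma)$ dependence that is the paper's main result --- falls out as a corollary of the same computation.
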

\begin{proof}
If the initial state $\mid\Psi_{0}\rangle=\mid0L\rangle$, then $\tilde{\Psi}(k,0)=\left(\begin{array}{c}
1\\
0
\end{array}\right)$. The probability of $\mid x\rangle:$

\begin{equation}
\begin{aligned} & P_{\mid0L\rangle}(\alpha,\beta,\gamma,x,t)\\
 & =P_{L}(\alpha,\beta,\gamma,x,t)+P_{R}(\alpha,\beta,\gamma,x,t)\\
 & =\frac{1}{4\pi^{2}}\int_{-\pi}^{\pi}\int_{-\pi}^{\pi}g_{L}(k_{1},k_{2},\alpha,\beta,\gamma,x,t)dk_{1}dk_{2}+\\
 & \qquad\frac{1}{4\pi^{2}}\int_{-\pi}^{\pi}\int_{-\pi}^{\pi}g_{R}(k_{1},k_{2},\alpha,\beta,\gamma,x,t)dk_{1}dk_{2},
\end{aligned}
\end{equation}
where $g_{j}(k_{1},k_{2},\alpha,\beta,\gamma,x,t)=\tilde{\Psi}_{j}^{*}(k_{1},t)\tilde{\Psi}_{j}(k_{2},t)e^{i(k_{1}-k_{2})x}$,
$j\in\{L,R\}$. If we set $h=k-\alpha$, then we can know $g_{j}(k_{1},k_{2},\alpha,\beta,\gamma,x,t)=g_{j}(h_{1},h_{2},\beta,x,t)$,
and $g_{j}(h_{1},h_{2},\beta,x,t)=g_{j}(h_{1}\pm2\pi,h_{2},\beta,x,t)=g_{j}(h_{1},h_{2}\pm2\pi,\beta,x,t)$,
then 

\begin{equation}
\begin{aligned}P_{j}(\alpha,\beta,\gamma,x,t) & =\frac{1}{4\pi^{2}}\int_{-\pi-\alpha}^{\pi-\alpha}\int_{-\pi-\alpha}^{\pi-\alpha}g_{j}(h_{1},h_{2},\beta,x,t)dh_{1}dh_{2}\\
 & =\frac{1}{4\pi^{2}}\int_{-\pi}^{\pi}\int_{-\pi}^{\pi}g_{j}(h_{1},h_{2},\beta,x,t)dh_{1}dh_{2}\\
 & =P_{j}(\beta,x,t).
\end{aligned}
\end{equation}
Further more we can know $P_{\mid0L\rangle}(\alpha,\beta,\gamma,x,t)=P_{\mid0L\rangle}(\beta,x,t)$.
In the same way, we can also get $P_{\mid0R\rangle}(\alpha,\beta,\gamma,x,t)=P_{\mid0R\rangle}(\beta,x,t)$. \end{proof}
\begin{thm}
\label{theo:0L-0R}After t steps, we have
\begin{equation}
\begin{cases}
\Psi_{\mid0L\rangle}^{R}(x,t)+\Psi_{\mid0R\rangle}^{L}(-x,t)\in i\mathbb{R}\\
\Psi_{\mid0L\rangle}^{R}(x,t)-\Psi_{\mid0R\rangle}^{L}(-x,t)\in\mathbb{R}
\end{cases},
\end{equation}
\begin{equation}
\begin{cases}
\Psi_{\mid0L\rangle}^{L}(x,t)+\Psi_{\mid0R\rangle}^{R}(-x,t)\in\mathbb{R}\\
\Psi_{\mid0L\rangle}^{L}(x,t)-\Psi_{\mid0R\rangle}^{R}(-x,t)\in i\mathbb{R}
\end{cases},\label{eq:lr}
\end{equation}
where $\Psi_{\mid n\rangle}^{m}(x,t)$ denotes the coefficient of
the $\mid x\rangle\mid m\rangle$ state after $t$ steps quantum walk
with the initail state $\mid n\rangle$.\end{thm}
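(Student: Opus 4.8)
The plan is to produce an antiunitary symmetry of the one--step operator $W=S(I_{P}\otimes U_{\alpha,\beta,\gamma}^{S})$ that carries the $\mid0L\rangle$ evolution into the reflected, coin--swapped complex conjugate of the $\mid0R\rangle$ evolution. Let $\Pi$ be the parity operator on $\mathcal{H}_{P}$, $\Pi\mid x\rangle=\mid-x\rangle$; let $\sigma_{y}$ be the Pauli-$y$ operator on the coin, for which $\sigma_{y}\mid L\rangle=i\mid R\rangle$ and $\sigma_{y}\mid R\rangle=-i\mid L\rangle$; and let $\mathcal{K}$ denote complex conjugation in the basis $\{\mid x\rangle\mid m\rangle\}$. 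Define the antiunitary operator $T=(\Pi\otimes\sigma_{y})\,\mathcal{K}$.

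The central step is to prove $TW=WT$, hence $TW^{t}=W^{t}T$ for all $t$. This splits into two checks. (i) $T$ commutes with $S$: the matrix elements of $S=\sum_{x}\hat{S}_{x}$ are real, so $\mathcal{K}S=S\mathcal{K}$, while $(\Pi\otimes\sigma_{y})S(\Pi\otimes\sigma_{y})=S$ because $\Pi$ reverses the shift direction just as $\sigma_{y}$ interchanges $\mid L\rangle\langle L\mid\leftrightarrow\mid R\rangle\langle R\mid$ (the $\pm i$ factors cancelling in $\sigma_{y}\mid m\rangle\langle m\mid\sigma_{y}$), so that the relabelling $x\mapsto-x$ turns the two terms of $\hat{S}_{x}$ into each other and sums them back to $S$. (ii) $T$ commutes with $I_{P}\otimes U_{\alpha,\beta,\gamma}^{S}$: pushing $\mathcal{K}$ through gives $T(I_{P}\otimes U^{S})=(\Pi\otimes\sigma_{y}\overline{U^{S}})\,\mathcal{K}$, and $\sigma_{y}\overline{U^{S}}=U^{S}\sigma_{y}$, that is $\sigma_{y}\overline{U^{S}}\sigma_{y}=U^{S}$, which follows either by direct substitution into Eq.~\eqref{eq:SU(2)} or from the general identity $\overline{M}=\sigma_{y}M\sigma_{y}$, valid for every $M\in SU(2)$. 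Since $T$ commutes with both factors of $W$, it commutes with $W$, and therefore with every power $W^{t}$.

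Next I would feed this into the evolution. Because $\mid0L\rangle$ has real coordinates, $T\mid0L\rangle=(\Pi\otimes\sigma_{y})\mid0L\rangle=i\mid0R\rangle$, so $W^{t}(i\mid0R\rangle)=T\bigl(W^{t}\mid0L\rangle\bigr)$. Writing $W^{t}\mid0L\rangle=\sum_{x}\bigl(\Psi_{\mid0L\rangle}^{L}(x,t)\mid xL\rangle+\Psi_{\mid0L\rangle}^{R}(x,t)\mid xR\rangle\bigr)$ and likewise for $W^{t}\mid0R\rangle$, I would apply $T$ to the right-hand side --- conjugating the amplitudes, sending $\mid x\rangle\mapsto\mid-x\rangle$, and letting $\sigma_{y}$ act on the coin (attaching $+i$ to the $\mid L\rangle\to\mid R\rangle$ part and $-i$ to the $\mid R\rangle\to\mid L\rangle$ part) --- then match coefficients of $\mid xL\rangle$ and $\mid xR\rangle$ and relabel $x\mapsto-x$. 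The result is
\[
\Psi_{\mid0R\rangle}^{L}(-x,t)=-\,\overline{\Psi_{\mid0L\rangle}^{R}(x,t)},\qquad\Psi_{\mid0R\rangle}^{R}(-x,t)=\overline{\Psi_{\mid0L\rangle}^{L}(x,t)}.
\]
Since $z+\bar{z}\in\mathbb{R}$ and $z-\bar{z}\in i\mathbb{R}$ for every $z\in\mathbb{C}$, taking $z=\Psi_{\mid0L\rangle}^{R}(x,t)$ gives the two asserted relations for $\Psi_{\mid0L\rangle}^{R}(x,t)\pm\Psi_{\mid0R\rangle}^{L}(-x,t)$, and taking $z=\Psi_{\mid0L\rangle}^{L}(x,t)$ gives those in Eq.~\eqref{eq:lr}; this is exactly the theorem.

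The main obstacle I anticipate is the phase bookkeeping in steps (i) and (ii): keeping track of the factors $i$ from $\sigma_{y}\mid L\rangle=i\mid R\rangle$ and $\sigma_{y}\mid R\rangle=-i\mid L\rangle$ so that they cancel where needed when checking $(\Pi\otimes\sigma_{y})S(\Pi\otimes\sigma_{y})=S$ and reappear with the correct signs in the final relations. The one conceptual choice is that the coin part of the symmetry must be $\sigma_{y}$ rather than $\sigma_{x}$: only $\sigma_{y}$-conjugation undoes complex conjugation on an SU(2) coin and returns $U_{\alpha,\beta,\gamma}^{S}$ itself, whereas $\sigma_{x}$ would instead return a coin with $\beta\mapsto-\beta$. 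A fully equivalent alternative, staying inside the Fourier formalism already developed, is to write $\Psi_{\mid0L\rangle}^{m}(x,t)$ and $\Psi_{\mid0R\rangle}^{m}(x,t)$ as integrals of the appropriate entries of $M_{k}^{t}$ against $e^{-ikx}$, use $\overline{M_{k}}=\sigma_{y}M_{k}\sigma_{y}$ (the SU(2) property of $M_{k}$), and note that complex conjugation replaces $e^{-ikx}$ by $e^{-ik(-x)}$; this yields the same two relations with somewhat heavier computation.
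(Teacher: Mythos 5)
Your proof is correct, and it takes a genuinely different route from the paper. The paper works in Fourier space: it expands $\Psi_{\mid0L\rangle}^{R}(x,t)\pm\Psi_{\mid0R\rangle}^{L}(-x,t)$ as a single $k$-integral using the eigenvectors of $M_{k}$ and reads off that the integrand lies in $i\mathbb{R}$ or $\mathbb{R}$ term by term, using $Q_{k}^{a}\in i\mathbb{R}$ and the fact that $P_{k}^{*}e^{-ikx}\mp P_{k}e^{ikx}$ and $e^{-i\omega_{k}t}-e^{i\omega_{k}t}$ have definite reality type. You instead exhibit the antiunitary symmetry $T=(\Pi\otimes\sigma_{y})\mathcal{K}$ of the one-step operator, and all of your checks go through: $(\Pi\otimes\sigma_{y})S(\Pi\otimes\sigma_{y})=S$ because $\sigma_{y}\mid L\rangle\langle L\mid\sigma_{y}=\mid R\rangle\langle R\mid$ with the $\pm i$ cancelling, $S$ has real matrix elements, and $\sigma_{y}\overline{M}\sigma_{y}=M$ holds for every $M\in SU(2)$, in particular for the coin of Eq.~\eqref{eq:SU(2)}. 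Your approach buys several things: it avoids the eigendecomposition and integral manipulations entirely, it is an induction-free one-step commutation check, and it delivers the \emph{stronger} pointwise identities $\Psi_{\mid0R\rangle}^{L}(-x,t)=-\overline{\Psi_{\mid0L\rangle}^{R}(x,t)}$ and $\Psi_{\mid0R\rangle}^{R}(-x,t)=\overline{\Psi_{\mid0L\rangle}^{L}(x,t)}$, of which the four reality statements of the theorem are immediate consequences and from which Corollary~\ref{coro:symmetry-LR} also follows at once by taking moduli. What the paper's computation buys in exchange is continuity with the Fourier machinery it has already set up and reuses in the subsequent theorems; your closing remark that the same symmetry can be phrased as $\overline{M_{k}}=\sigma_{y}M_{k}\sigma_{y}$ inside that formalism correctly identifies the bridge between the two arguments.
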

\begin{proof}
\begin{equation}
\begin{aligned} & \Psi_{\mid0L\rangle}^{R}(x,t)\pm\Psi_{\mid0R\rangle}^{L}(-x,t)\\
 & =\frac{1}{2\pi}\int_{-\pi}^{\pi}\tilde{\Psi}_{\mid0L\rangle}^{R}(k,t)e^{-ikx}dk\pm\frac{1}{2\pi}\int_{-\pi}^{\pi}\tilde{\Psi}_{\mid0R\rangle}^{L}(k,t)e^{ikx}dk\\
 & =\frac{1}{2\pi}\int_{-\pi}^{\pi}\frac{Q_{k}^{a}}{(C_{k}^{a})^{2}}[P_{k}^{*}e^{-ikx}\mp P_{k}e^{ikx}][e^{-i\omega_{k}t}-e^{i\omega_{k}t}]dk.
\end{aligned}
\end{equation}
As $Q_{k}^{a}\in i\mathbb{R}$, we can know $\Psi_{\mid0L\rangle}^{R}(x,t)+\Psi_{\mid0R\rangle}^{L}(-x,t)\in i\mathbb{R}$,
and $\Psi_{\mid0L\rangle}^{R}(x,t)-\Psi_{\mid0R\rangle}^{L}(-x,t)\in\mathbb{R}.$
Similarly, we can get Eq. \eqref{eq:lr}.\end{proof}
\begin{cor}
\label{coro:symmetry-LR}The symmetry property of distribution between
quantum walks with a U(2) coin in initial state $\mid0L\rangle$ and
$\mid0R\rangle$: For an arbitrary $t$, $P_{\mid0L\rangle}^{R}(\beta,x,t)=P_{\mid0R\rangle}^{L}(\beta,-x,t)$,
$P_{\mid0L\rangle}^{L}(\beta,x,t)=P_{\mid0R\rangle}^{R}(\beta,-x,t)$.\end{cor}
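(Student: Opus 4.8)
The plan is to read the corollary off Theorem~\ref{theo:0L-0R} almost immediately, using only an elementary remark about complex numbers together with the first theorem of this section to strip away the $\alpha,\gamma$ dependence. Fix $x$ and $t$. Theorem~\ref{theo:0L-0R} supplies, for the pair $z=\Psi_{\mid0L\rangle}^{R}(x,t)$ and $w=\Psi_{\mid0R\rangle}^{L}(-x,t)$, that $z+w\in i\mathbb{R}$ and $z-w\in\mathbb{R}$; and for the pair $z=\Psi_{\mid0L\rangle}^{L}(x,t)$ and $w=\Psi_{\mid0R\rangle}^{R}(-x,t)$, that $z+w\in\mathbb{R}$ and $z-w\in i\mathbb{R}$.

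The elementary remark is this: if $z+w\in i\mathbb{R}$ and $z-w\in\mathbb{R}$, then writing $z=a+ib$ and $w=c+id$ with $a,b,c,d\in\mathbb{R}$, the first condition gives $a+c=0$ and the second gives $b-d=0$, so $w=-a+ib=-\bar z$; if instead $z+w\in\mathbb{R}$ and $z-w\in i\mathbb{R}$, then $b+d=0$ and $a-c=0$, so $w=a-ib=\bar z$. In either case $|w|=|z|$. Applying this to the two pairs identified above yields $|\Psi_{\mid0L\rangle}^{R}(x,t)|^{2}=|\Psi_{\mid0R\rangle}^{L}(-x,t)|^{2}$ and $|\Psi_{\mid0L\rangle}^{L}(x,t)|^{2}=|\Psi_{\mid0R\rangle}^{R}(-x,t)|^{2}$, i.e. $P_{\mid0L\rangle}^{R}(x,t)=P_{\mid0R\rangle}^{L}(-x,t)$ and $P_{\mid0L\rangle}^{L}(x,t)=P_{\mid0R\rangle}^{R}(-x,t)$. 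The component-wise identity $P_j(\alpha,\beta,\gamma,x,t)=P_j(\beta,x,t)$ for $j\in\{L,R\}$ established inside the first theorem of this section then shows that each of these quantities depends on the coin only through $\beta$, so they may be written $P_{\mid0L\rangle}^{R}(\beta,x,t)$, $P_{\mid0R\rangle}^{L}(\beta,-x,t)$, and so on, which is exactly the assertion of the corollary.

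I do not expect any genuine obstacle here: all of the analytic content — the Fourier representation of the amplitudes, the fact that $Q_k^{a}$ is purely imaginary, and the symmetry $k\mapsto -k$ built into the change of variables — has already been carried out in Theorem~\ref{theo:0L-0R}. The one point that wants a little care is that the $\alpha,\gamma$-independence must be invoked at the level of the individual $\mid x\rangle\mid L\rangle$ and $\mid x\rangle\mid R\rangle$ amplitudes, not just for the summed probability $P_{\mid0L\rangle}$; this is precisely what the displayed computation in the proof of the first theorem provides. Alternatively, one may keep the full parameter list throughout and state the identity as $P_{\mid0L\rangle}^{R}(\alpha,\beta,\gamma,x,t)=P_{\mid0R\rangle}^{L}(\alpha,\beta,\gamma,-x,t)$, which follows from the modulus identity alone with no appeal to the first theorem.
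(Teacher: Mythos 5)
Your proposal is correct and follows essentially the same route as the paper: both deduce from Theorem~\ref{theo:0L-0R} that the amplitude pairs satisfy $w=-\bar z$ (respectively $w=\bar z$), hence have equal moduli, which gives the stated equality of probabilities. Your additional remark that the $\alpha,\gamma$-independence must be taken from the component-wise identity $P_{j}(\alpha,\beta,\gamma,x,t)=P_{j}(\beta,x,t)$ inside the first theorem's proof is a point the paper leaves implicit, but it is consistent with what is actually established there.
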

\begin{proof}
We set $\Psi_{\mid0L\rangle}^{R}(x)=C+Di$, where $C,D\in\mathbb{R}$.
From Theorem \ref{theo:0L-0R}, we can know $\Psi_{\mid0R\rangle}^{L}(-x)=-C+Di$,
then we can know $P_{\mid0L\rangle}^{R}(\beta,x,t)=P_{\mid0R\rangle}^{L}(\beta,-x,t)$.
Similarly, we can also get $P_{\mid0L\rangle}^{L}(\beta,x,t)=P_{\mid0R\rangle}^{R}(\beta,-x,t)$.\end{proof}
\begin{thm}
\label{theo:mn}If the initial state $\mid\Psi_{0}\rangle=m\mid0L\rangle+n\mid0R\rangle$,
where $|m|^{2}+|n|^{2}=1$, the probability at state $\mid xL\rangle$
or $\mid xR\rangle$ after $t$ steps quantum walk is
\begin{widetext}
\begin{equation}
\begin{cases}
P^{L}(x)=|m|^{2}P_{\mid0L\rangle}^{L}+|n|^{2}P_{\mid0R\rangle}^{L}-(e^{-i(\alpha+\gamma)}m^{*}n+e^{i(\alpha+\gamma)}mn^{*})G^{L}(\beta,x,t)\\
P^{R}(x)=|m|^{2}P_{\mid0L\rangle}^{R}+|n|^{2}P_{\mid0R\rangle}^{R}-(e^{-i(\alpha+\gamma)}m^{*}n+e^{i(\alpha+\gamma)}mn^{*})G^{R}(\beta,x,t)
\end{cases},\label{eq:PLMN0}
\end{equation}

\end{widetext}
where $G^{L}$ and $G^{R}$ are indepent of $\alpha$ and $\gamma$.\end{thm}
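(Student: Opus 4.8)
The plan is to strip the phases $\alpha,\gamma$ off the evolution operator by two commuting gauge transformations, reducing every such walk to the one driven by the real rotation coin $U_{0,\beta,0}^{S}$, whose amplitudes are real numbers.

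First I would conjugate by the position phase $V=\sum_{x}e^{-i\alpha x}\mid x\rangle\langle x\mid\otimes I_{C}$. A direct computation gives $VSV^{-1}=S\bigl(I_{P}\otimes\mathrm{diag}(e^{i\alpha},e^{-i\alpha})\bigr)$ in the basis $\{\mid L\rangle,\mid R\rangle\}$, and since $V$ commutes with every operator acting trivially on $\mathcal{H}_{P}$, using $U_{\alpha,\beta,\gamma}^{S}=\mathrm{diag}(e^{i\alpha},e^{-i\alpha})\,U_{0,\beta,\gamma+\alpha}^{S}$ one gets $[S(I_{P}\otimes U_{\alpha,\beta,\gamma}^{S})]^{t}=V\,[S(I_{P}\otimes U_{0,\beta,\gamma+\alpha}^{S})]^{t}\,V^{-1}$. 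Because $V^{-1}$ acts as the identity at $x=0$, it fixes $\mid0L\rangle$ and $\mid0R\rangle$, so the amplitude $\Psi^{j}(x,t)$ for the coin $U_{\alpha,\beta,\gamma}^{S}$ equals $e^{-i\alpha x}$ times the corresponding amplitude for $U_{0,\beta,\gamma+\alpha}^{S}$, $j\in\{L,R\}$; in particular $P^{j}$ depends on $\alpha$ and $\gamma$ only through $\alpha+\gamma$, consistently with Fig.~\ref{fig:ave}.

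Next I would conjugate by the coin phase $I_{P}\otimes D$ with $D=\mathrm{diag}(e^{-i\phi/2},e^{i\phi/2})$. Since $D$ is diagonal in the coin basis, $I_{P}\otimes D$ commutes with $S$, and $D\,U_{0,\beta,0}^{S}\,D^{-1}=U_{0,\beta,\phi}^{S}$; conjugating the evolution and letting $D^{-1}$ act on the initial state then shows that the walk with coin $U_{0,\beta,\phi}^{S}$ and amplitudes $m,n$ produces the same distribution as the walk with coin $U_{0,\beta,0}^{S}$ and amplitudes $m e^{i\phi/2}$, $n e^{-i\phi/2}$. Taking $\phi=\alpha+\gamma$ and combining with the first step, $P^{j}(x,t)$ for the coin $U_{\alpha,\beta,\gamma}^{S}$ with amplitudes $m,n$ equals $P^{j}(x,t)$ for the coin $U_{0,\beta,0}^{S}$ with amplitudes $m'=m e^{i(\alpha+\gamma)/2}$ and $n'=n e^{-i(\alpha+\gamma)/2}$. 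Since $U_{0,\beta,0}^{S}$ is a real rotation, $S(I_{P}\otimes U_{0,\beta,0}^{S})$ has real matrix elements, so starting from the real vectors $\mid0L\rangle$ and $\mid0R\rangle$ the four reference amplitudes $\Psi_{\mid0L\rangle}^{j}(x,t)$ and $\Psi_{\mid0R\rangle}^{j}(x,t)$ are real.

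Finally I would expand. In the reference walk $\Psi^{j}(x,t)=m'\Psi_{\mid0L\rangle}^{j}(x,t)+n'\Psi_{\mid0R\rangle}^{j}(x,t)$; squaring, the diagonal part is $|m|^{2}(\Psi_{\mid0L\rangle}^{j})^{2}+|n|^{2}(\Psi_{\mid0R\rangle}^{j})^{2}=|m|^{2}P_{\mid0L\rangle}^{j}(\beta,x,t)+|n|^{2}P_{\mid0R\rangle}^{j}(\beta,x,t)$ by the first Theorem of this section applied componentwise, while the cross part, since the reference amplitudes are real, equals $2\,\mathrm{Re}\!\left((m')^{*}n'\right)\Psi_{\mid0L\rangle}^{j}\Psi_{\mid0R\rangle}^{j}=\left(e^{-i(\alpha+\gamma)}m^{*}n+e^{i(\alpha+\gamma)}mn^{*}\right)\Psi_{\mid0L\rangle}^{j}(x,t)\Psi_{\mid0R\rangle}^{j}(x,t)$. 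Putting $G^{j}(\beta,x,t):=-\Psi_{\mid0L\rangle}^{j}(x,t)\Psi_{\mid0R\rangle}^{j}(x,t)$, which is real and manifestly free of $\alpha$ and $\gamma$, reproduces \eqref{eq:PLMN0}. The calculation has no deep obstacle; the only points needing care are the signs in the two conjugation identities and the remark that the earlier theorem applies separately to the $\mid L\rangle$ and $\mid R\rangle$ components. A more computational alternative, in the style of the preceding proofs, is to insert the eigendecomposition of $M_{k}$ into the Fourier integral for each reference amplitude, pull out the explicit $e^{-i\alpha x}$ and $e^{\pm i(\alpha+\gamma)}$ phases by hand, and check that the remaining $k$-integrands satisfy $f(-h)=\overline{f(h)}$ with $h=k-\alpha$, which is precisely what forces the reduced amplitudes to be real.
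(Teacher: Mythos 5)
Your proof is correct, but it takes a genuinely different route from the paper's. The paper works entirely in Fourier space: it writes $P^{L}_{m\mid0L\rangle+n\mid0R\rangle}(x)$ as a double $k$-integral, expands in the eigenbasis of $M_{k}$, pulls the phases $e^{\pm i(\alpha+\gamma)}$ out of the cross terms by hand (they come from the factor $P_{k}=-e^{-i(k+\gamma)}\sin\beta$ together with the substitution $h=k-\alpha$), and then argues that the four residual integrals $G_{1},\dots,G_{4}$ are real --- explicitly for $G_{1}$ only, with the other three dismissed by ``as the same''. You instead strip the phases off the evolution by two conjugations: the position phase $V=\sum_{x}e^{-i\alpha x}\mid x\rangle\langle x\mid\otimes I_{C}$, which uses the identity $U^{S}_{\alpha,\beta,\gamma}=\mathrm{diag}(e^{i\alpha},e^{-i\alpha})\,U^{S}_{0,\beta,\gamma+\alpha}$ and fixes states at $x=0$, and the coin phase $\mathrm{diag}(e^{-i\phi/2},e^{i\phi/2})$ with $\phi=\alpha+\gamma$, which commutes with $S$ and dumps the phase into the initial amplitudes $m'=me^{i(\alpha+\gamma)/2}$, $n'=ne^{-i(\alpha+\gamma)/2}$. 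Both identities check out, and the reduction to the real rotation coin $U^{S}_{0,\beta,0}$ makes the reality of the reference amplitudes $\Psi^{j}_{\mid0L\rangle}$, $\Psi^{j}_{\mid0R\rangle}$ immediate, so the cross term of $|m'\Psi^{j}_{\mid0L\rangle}+n'\Psi^{j}_{\mid0R\rangle}|^{2}$ gives the claimed $e^{\pm i(\alpha+\gamma)}$ structure with the closed form $G^{j}=-\Psi^{j}_{\mid0L\rangle}\Psi^{j}_{\mid0R\rangle}$. What your approach buys: it replaces the paper's partially verified reality claim for the $G_{i}$ with a manifest one, it produces an explicit formula for $G^{L},G^{R}$, and as a byproduct it reproves the section's first theorem componentwise (so your appeal to that theorem is only needed to match notation, not as a logical ingredient). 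What the paper's approach buys is uniformity --- it stays inside the Nayak--Vishwanath Fourier formalism used for every other result --- and it exhibits $G^{j}$ directly as an integral, which is what the final theorem sums over. The only points you should make explicit in a written-up version are the ordering convention $\{\mid L\rangle,\mid R\rangle\}$ in $\mathrm{diag}(e^{i\alpha},e^{-i\alpha})$ (the $L$ component must carry $e^{i\alpha}$ since $S$ sends $\mid x\rangle\mid L\rangle$ to $\mid x-1\rangle\mid L\rangle$) and the observation that both conjugating unitaries are diagonal in the $\mid x\rangle\mid j\rangle$ basis, so neither changes any $|\Psi^{j}(x,t)|^{2}$.
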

\begin{proof}
The probability at state $\mid xL\rangle$ after $t$ steps:
\begin{widetext}
\begin{equation}
\begin{aligned}P_{m\mid0L\rangle+n\mid0R\rangle}^{L}(x) & =\frac{1}{4\pi^{2}}\int\int\tilde{\Psi}_{L}^{*}(k_{1},t)\tilde{\Psi}_{L}(k_{2},t)e^{i(k_{1}-k_{2})x}dk_{1}dk_{2}\\
 & =|m|^{2}P_{0L}^{L}(\beta,x,t)+|n|^{2}P_{0R}^{L}(\beta,x,t)-\\
 & \qquad(e^{-i(\alpha+\gamma)}m^{*}n\sum_{i=1}^{4}G_{i}+e^{i(\alpha+\gamma)}mn^{*}\sum_{i=1}^{4}G_{i}^{*}),
\end{aligned}
\label{eq:PLMN}
\end{equation}
where

\begin{equation}
\begin{aligned}G_{1}(\beta,x,t) & =\frac{1}{4\pi^{2}}\int\int e^{i(\omega_{h_{1}}-\omega_{h_{2}})t}\frac{1}{(C_{h_{1}}^{b}C_{h_{2}}^{b})^{2}}e^{i(h_{1}-h_{2})}\sin^{3}\beta e^{i(h_{1}-h_{2})x}e^{-ih_{1}}(Q_{h_{2}}^{b})^{*}dh_{1}dh_{2}\\
 & =\frac{1}{4\pi^{2}}\int\int\frac{(Q_{h_{2}}^{b})^{*}\sin^{3}\beta}{(C_{h_{1}}^{b}C_{h_{2}}^{b})^{2}}i\sin[(\omega_{h_{1}}-\omega_{h_{2}})+(h_{1}-h_{2})+(h_{1}-h_{2})x-h_{1}]dh_{1}dh_{2}\in R.
\end{aligned}
\end{equation}

As the same of $G_{1}(\beta,x,t)$, we can know $G_{i}(\beta,x,t)\in R$,
where $i\in\{1,2,3,4\}$. So Eq. \eqref{eq:PLMN} can be written as
\begin{equation}
P^{L}(x)=|m|^{2}P_{\mid0L\rangle}^{L}+|n|^{2}P_{\mid0R\rangle}^{L}-(e^{-i(\alpha+\gamma)}m^{*}n+e^{i(\alpha+\gamma)}mn^{*})G^{L}(\beta,x,t),
\end{equation}

\end{widetext}
where $G^{L}(\beta,x,t)=\sum_{i=1}^{4}G_{i}$. In the same way as
$P^{L}(x)$, we can get $P^{R}(x)$ in Eq. \eqref{eq:PLMN0}. \end{proof}
\begin{thm}
If the initail state $\mid\Psi_{0}\rangle=1/\sqrt{2}(\mid0L\rangle+i\mid0R\rangle)$,
The average position after $t$ steps quantum walk: $\left\langle x\right\rangle =G(\beta,t)\sin(\alpha+\gamma)$,
where $G(\beta,t)$ only depends on $\beta$ and $t$.\end{thm}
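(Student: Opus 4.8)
The plan is to specialize Theorem~\ref{theo:mn} to $m=1/\sqrt2$, $n=i/\sqrt2$ and then to annihilate the $(\alpha,\gamma)$-independent part of $\langle x\rangle$ using the mirror symmetry of Corollary~\ref{coro:symmetry-LR}.

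First I would evaluate the interference coefficient that appears in Eq.~\eqref{eq:PLMN0}. For $m=1/\sqrt2$, $n=i/\sqrt2$ one has $|m|^2=|n|^2=\tfrac12$, $m^*n=\tfrac i2$, $mn^*=-\tfrac i2$, hence
\[
e^{-i(\alpha+\gamma)}m^*n+e^{i(\alpha+\gamma)}mn^*=\frac i2\bigl(e^{-i(\alpha+\gamma)}-e^{i(\alpha+\gamma)}\bigr)=\sin(\alpha+\gamma).
\]
Substituting this into Eq.~\eqref{eq:PLMN0} and adding the $L$ and $R$ rows gives
\[
P(x,t)=\tfrac12 P_{\mid0L\rangle}(\beta,x,t)+\tfrac12 P_{\mid0R\rangle}(\beta,x,t)-\sin(\alpha+\gamma)\bigl[G^L(\beta,x,t)+G^R(\beta,x,t)\bigr],
\]
where $P_{\mid0L\rangle}=P_{\mid0L\rangle}^L+P_{\mid0L\rangle}^R$ (and likewise for $\mid0R\rangle$) and $G^L,G^R$ are $(\alpha,\gamma)$-independent. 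Now multiply by $x$ and sum over $|x|\le t$; this is a finite sum, so there is no convergence issue, and
\[
\langle x\rangle=\tfrac12\langle x\rangle_{\mid0L\rangle}+\tfrac12\langle x\rangle_{\mid0R\rangle}-\sin(\alpha+\gamma)\sum_x x\bigl[G^L(\beta,x,t)+G^R(\beta,x,t)\bigr].
\]

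The remaining point is that the first two terms cancel. By Corollary~\ref{coro:symmetry-LR}, $P_{\mid0L\rangle}^L(\beta,x,t)=P_{\mid0R\rangle}^R(\beta,-x,t)$ and $P_{\mid0L\rangle}^R(\beta,x,t)=P_{\mid0R\rangle}^L(\beta,-x,t)$, so the substitution $x\mapsto-x$ in the sum gives $\sum_x x\,P_{\mid0L\rangle}(\beta,x,t)=-\sum_x x\,P_{\mid0R\rangle}(\beta,x,t)$, i.e.\ $\langle x\rangle_{\mid0L\rangle}=-\langle x\rangle_{\mid0R\rangle}$. Hence
\[
\langle x\rangle=-\sin(\alpha+\gamma)\sum_x x\bigl[G^L(\beta,x,t)+G^R(\beta,x,t)\bigr]\equiv G(\beta,t)\sin(\alpha+\gamma),
\]
with $G(\beta,t):=-\sum_x x\bigl[G^L(\beta,x,t)+G^R(\beta,x,t)\bigr]$, which is a function of $\beta$ and $t$ only since $G^L,G^R$ are $(\alpha,\gamma)$-independent by Theorem~\ref{theo:mn}. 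The only genuine work is the bookkeeping in the first step — confirming that Theorem~\ref{theo:mn} really packages \emph{all} of the $\alpha,\gamma$ dependence into that single scalar so that $G^L,G^R$ are clean — while the conceptual crux is recognizing the cancellation $\langle x\rangle_{\mid0L\rangle}+\langle x\rangle_{\mid0R\rangle}=0$ as a direct consequence of the mirror symmetry between the $\mid0L\rangle$ and $\mid0R\rangle$ walks.
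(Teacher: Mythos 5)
Your proposal is correct and follows essentially the same route as the paper: both specialize Theorem \ref{theo:mn} to $m=1/\sqrt2$, $n=i/\sqrt2$, use the mirror symmetry of Corollary \ref{coro:symmetry-LR} to cancel the non-interference terms in $\sum_x x\,P(x,t)$, and identify $G(\beta,t)=-\sum_x x[G^L(\beta,x,t)+G^R(\beta,x,t)]$. The only cosmetic difference is that you explicitly evaluate the coefficient $e^{-i(\alpha+\gamma)}m^*n+e^{i(\alpha+\gamma)}mn^*=\sin(\alpha+\gamma)$ (a step the paper leaves implicit) and you group the cancelling terms by initial state rather than by the crossed pairs in Eq.~\eqref{eq:PLR0}, which is an equivalent regrouping.
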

\begin{proof}
From Corollary \ref{coro:symmetry-LR}, we can know
\begin{equation}
\begin{cases}
\sum_{x=-t}^{t}x[P_{\mid0L\rangle}^{R}(\beta,x,t)+P_{\mid0R\rangle}^{L}(\beta,x,t)]=0\\
\sum_{x=-t}^{t}x[P_{\mid0L\rangle}^{L}(\beta,x,t)+P_{\mid0R\rangle}^{R}(\beta,x,t)]=0
\end{cases}\text{.}\label{eq:PLR0}
\end{equation}
 Using Eq. \eqref{eq:PLR0} and Theorem \ref{theo:mn} we can know
\begin{equation}
\begin{aligned}\left\langle x\right\rangle  & =\sum_{x=-t}^{t}x(P^{L}(\alpha,\beta,\gamma,x,t)+P^{R}(\alpha,\beta,\gamma,x,t))\\
 & =G(\beta,t)\sin(\alpha+\gamma),
\end{aligned}
\end{equation}
where $G(\beta,t)=-\sum_{x=-t}^{t}x[G^{L}(\beta,x,t)+G^{R}(\beta,x,t)]$
only depends on $\beta$ and $t$, regardless of $\alpha$ or $\gamma$. 
\end{proof}

\section{conclusions\label{sec:conclusions}}

In this paper, we discussed the properties of the average position
in QWs with an arbitrary unitary coin. With a SU(2) coin, if the initial
state is $\mid0L\rangle$ or $\mid0R\rangle$, the probability distribution
is independent on $\alpha$ and $\gamma$. Some symmetry properties
between different initial states $\mid0L\rangle$ and $\mid0R\rangle$
was prooved, we get that $P_{\mid0L\rangle}^{R}(\beta,x,t)=P_{\mid0R\rangle}^{L}(\beta,-x,t)$
and $P_{\mid0L\rangle}^{L}(\beta,x,t)=P_{\mid0R\rangle}^{R}(\beta,-x,t)$.
If the initial state $\mid\Psi_{0}\rangle=1/\sqrt{2}(\mid0L\rangle+i\mid0R\rangle)$,
we can know the average $\left\langle x\right\rangle =G(\beta,t)\sin(\alpha+\gamma)$,
so if we replace the Hadamard operator with an arbitrary unitary operator,
the average position is always not equal to $0$, unless $\alpha+\beta=n\pi,n\in\mathbb{Z}$.
\begin{acknowledgments}
This work was supported by the National Natural Science Foundation
of China (Grant No. 10974192, 61275122), the National Fundamental
Research Program of China (Grant No. 2011CB921200, 2011CBA00200),
K. C. Wong Education Foundation and CAS.\end{acknowledgments}

\end{document}